\documentclass[10pt,twocolumn,twoside,aps,pra,nopacs,superscriptaddress,a4paper]{revtex4-1}
\usepackage{graphicx}
\usepackage{amsmath,amssymb}
\usepackage{color}
\usepackage{bbm}
\usepackage{hyperref}
\usepackage{theorem}
\usepackage{latexsym}
\usepackage{enumitem}

\newtheorem{theorem}{Theorem}

\newtheorem{corollary}[theorem]{Corollary}

\newtheorem{definition}[theorem]{Definition}
\newtheorem{example}[theorem]{Example}

\newtheorem{lemma}[theorem]{Lemma}

\newtheorem{proposition}[theorem]{Proposition}
\newtheorem{remark}[theorem]{Remark}

\newlength{\blank}
\settowidth{\blank}{\emph{~}}
\newenvironment{proof}[1][{\hspace{-\blank}}]{{\medskip\noindent\textbf{Proof~{#1}.\ }}}{\hfill\qed}

\newcommand{\ket}[1]{|#1\rangle}
\newcommand{\bra}[1]{\langle#1|}


\mathchardef\ordinarycolon\mathcode`\:
\mathcode`\:=\string"8000
\def\vcentcolon{\mathrel{\mathop\ordinarycolon}}
\begingroup \catcode`\:=\active
  \lowercase{\endgroup
  \let :\vcentcolon
  }

\newcommand{\nc}{\newcommand}
\nc{\rnc}{\renewcommand}
\nc{\lbar}[1]{\overline{#1}}
\nc{\ketbra}[2]{|#1\rangle\!\langle#2|}
\nc{\proj}[1]{| #1\rangle\!\langle #1 |}
\nc{\avg}[1]{\langle#1\rangle}
\nc{\Rank}{\operatorname{Rank}}
\nc{\smfrac}[2]{\mbox{$\frac{#1}{#2}$}}
\nc{\tr}{\operatorname{Tr}}
\nc{\ox}{\otimes}
\nc{\dg}{\dagger}
\nc{\dn}{\downarrow}
\nc{\cA}{\mathcal{A}}
\nc{\cB}{\mathcal{B}}
\nc{\cC}{\mathcal{C}}
\nc{\cD}{\mathcal{D}}
\nc{\cE}{\mathcal{E}}
\nc{\cF}{\mathcal{F}}
\nc{\cG}{\mathcal{G}}
\nc{\cH}{\mathcal{H}}
\nc{\cI}{\mathcal{I}}
\nc{\cJ}{\mathcal{J}}
\nc{\cK}{\mathcal{K}}
\nc{\cL}{\mathcal{L}}
\nc{\cM}{\mathcal{M}}
\nc{\cN}{\mathcal{N}}
\nc{\cO}{\mathcal{O}}
\nc{\cP}{\mathcal{P}}
\nc{\cR}{\mathcal{R}}
\nc{\cS}{\mathcal{S}}
\nc{\cT}{\mathcal{T}}
\nc{\cX}{\mathcal{X}}
\nc{\cZ}{\mathcal{Z}}
\nc{\csupp}{{\operatorname{csupp}}}
\nc{\qsupp}{{\operatorname{qsupp}}}
\nc{\var}{\operatorname{var}}
\nc{\rar}{\rightarrow}
\nc{\lrar}{\longrightarrow}
\nc{\polylog}{\operatorname{polylog}}
\nc{\1}{{\openone}}
\nc{\id}{{\operatorname{id}}}

\nc{\RR}{{{\mathbb R}}}
\nc{\CC}{{{\mathbb C}}}
\nc{\FF}{{{\mathbb F}}}
\nc{\NN}{{{\mathbb N}}}
\nc{\ZZ}{{{\mathbb Z}}}
\nc{\PP}{{{\mathbb P}}}
\nc{\QQ}{{{\mathbb Q}}}
\nc{\UU}{{{\mathbb U}}}
\nc{\EE}{{{\mathbb E}}}

\nc{\GOCC}{{\mathrm{GOCC}}}
\nc{\Wplus}{{\mathrm{W+}}}

\nc{\qed}{{$\hfill\Box$}}

\begin{document}

\title{Bosonic data hiding: power of linear vs non-linear optics}

\author{Krishna Kumar Sabapathy}
\email{krishnakumar.sabapathy@gmail.com}
\affiliation{Xanadu, 777 Bay Street, Toronto ON, M5G 2C8, Canada}
\affiliation{Departament de F\'{\i}sica: Grup d'Informaci\'{o} Qu\`{a}ntica, %
Universitat Aut\`{o}noma de Barcelona, 08193 Bellaterra (Barcelona), Spain}

\author{Andreas Winter}
\email{andreas.winter@uab.cat}
\affiliation{Departament de F\'{\i}sica: Grup d'Informaci\'{o} Qu\`{a}ntica, %
Universitat Aut\`{o}noma de Barcelona, 08193 Bellaterra (Barcelona), Spain}
\affiliation{ICREA---Instituci\'o Catalana de Recerca i Estudis Avan\c{c}ats, %
Pg.~Lluis Companys, 23, 08010 Barcelona, Spain} 

\date{2 February 2021}

\begin{abstract}
  We show that the positivity of the Wigner function of Gaussian states
  and measurements provides an elegant way to bound the discriminating
  power of ``linear optics'', which we formalise as 
  Gaussian measurement operations augmented by classical (feed-forward) 
  communication (GOCC). 
  This allows us to reproduce and generalise the result of 
  Takeoka and Sasaki [PRA 78:022320, 2008],
  which tightly characterises the GOCC norm distance of coherent states,
  separating it from the optimal distinguishability according to 
  Helstrom's theorem.
  
  Furthermore, invoking ideas from classical and quantum Shannon theory
  we show that there are states, each a probabilistic mixture of multi-mode
  coherent states, which are exponentially reliably discriminated in
  principle, but appear exponentially close judging from the output
  of GOCC measurements. In analogy to LOCC data hiding,
  which shows an irreversibility in the preparation and discrimination of
  states by the restricted class of local operations and classical communication
  (LOCC), we call the present effect \emph{GOCC data hiding}.
  
  We also present general bounds in the opposite direction, guaranteeing
  a minimum of distinguishability under measurements with positive Wigner 
  function, for any bounded-energy states that are Helstrom distinguishable.
  We conjecture that a similar bound holds for GOCC measurements. 
\end{abstract}


\maketitle

\section{Introduction}
\label{sec:intro}
One of the most basic problems of quantum information theory is the 
discrimination of two alternatives (``hypotheses''), each of which 
represents the possible state 
of a system, $\rho_0$ or $\rho_1$. Under the formalism of quantum mechanics, this 
calls for the design of a measurement and a decision rule to choose between the
two options based on the measurement outcome. The measurement is a
binary resolution of unity, also called a positive operator valued measure 
(POVM), $(M_0=M,M_1=\1-M)$ of two semidefinite operators $M_0\,M_1 \geq 0$
summing to $M_0+M_1=\1$. The outcome $M_{\hat{i}}$ of the measurement is intended 
to correspond to the estimate $\hat{i}$ of the true state $\rho_i$.
For simplicity, we will assume that the two hypotheses come with equal (uniform) 
prior probabilities, so the error probability is 
\begin{equation}\begin{split}
  \label{eq:HelstromHolevo}
  P_e &= \frac12 \tr\rho_0 (\1-M) + \frac12 \tr\rho_1 M \\
      &= \frac12 \bigl( 1-\tr (\rho_0-\rho_1)M \bigr).
\end{split}\end{equation}
The minimum error over all quantum mechanically allowed POVMs gives 
rise to the trace norm,
\begin{equation}
  \label{eq:tracenorm}
  \min_{0\leq M\leq\1} P_e = \frac12 \left( 1-\frac12\|\rho_0-\rho_1\|_1 \right),
\end{equation}
which formula is nowadays known as Helstrom bound \cite{Helstrom,Helstrom:book}
or Holevo-Helstrom bound \cite{Holevo:dist}, since it was initially only proved for 
projective measurements and subsequently for generalised measurements. 

However, from the beginning of quantum detection theory, it was understood 
that -- depending on the physical system -- the Helstrom optimal measurement 
may not be easily implemented. Indeed, the very example of discrimination of 
to coherent states of an optical mode was already considered by Helstrom \cite{Helstrom:coherent}, 
who contrasted the absolutely minimum error probability with the performance 
of reasonable practical measurements. 
Mathematically, this means that the minimisation on the l.h.s. of 
Eq.~(\ref{eq:tracenorm}) is performed over a smaller set of POVMs,
the restriction an expression of what is deemed physically feasible.
Consequently, the error probability becomes larger, in some interesting 
cases close to $\frac12$ even for orthogonal, i.e.~ideally perfectly 
distinguishable, states. This phenomenon was first observed in bipartite 
systems under the restriction of local operations and classical communication
(LOCC), and dubbed \emph{data hiding} \cite{Terhal-datahiding,DiVincenzo-datahiding}, 
which has been generalised to multi-party settings \cite{EggelingWerner},
and analysed extensively \cite{MWW,LW,W:eff}.

In the present paper we will look at a different kind of restriction, 
in Bosonic quantum systems, motivated by the distinction between 
phase-space \emph{linear} (aka \emph{Gaussian}) and \emph{non-linear}
(i.e.~non-Gaussian) operations, see also \cite{KKVV}. 
It is well-known that a process that 
starts from a Gaussian state and proceeds only via Gaussian operations,
including Gaussian measurements and classical feed-forward (GOCC, see below),
is in a certain sense very far away from the full complexity of quantum 
mechanics: indeed, such a process can be simulated efficiently on a  
classical computer \cite{bartlett,mari} and hence, unless BQP=BPP, is not quantum computationally 
universal. In other words, non-Gaussianity is a resource for computation, 
which it becomes quite explicitly in proposals of optical quantum computing
such as the Knill-Laflamme-Milburn scheme \cite{KLM} that relies on 
photon detection and otherwise passive linear optics. 

Here we show that non-Gaussianity is a resource for the basic task of binary 
hypothesis testing. In particular we show how to leverage simple properties
of the Wigner function to prove not only a limitation of the power of 
Gaussian operations, but construct data hiding with respect to GOCC.

To conclude the introduction, a word on terminology: we refer to Gaussian states 
and channels as ``linear'', because the latter are described by linear 
transformations in the phase space of the canonical variables $x$ and $p$.
Conversely, ``non-linear'' is anything outside the Gaussian set. 
Note however that in parts of quantum optics a narrower concept is used, 
whereby only channels are considered linear that are built with passive 
Gaussian unitaries, and perhaps admitting displacement operators. 

\medskip
The rest of paper is structured as follows: In the next section (\ref{sec:gaussian})
we recall the necessary formalism and notation of quantum harmonic
oscillators and Gaussian Bosonic states and operations; for our purposes
in particular useful will be the phase space methods based on Wigner functions.
Then, in Section~\ref{sec:GOCC} we specialise the general framework
of restricted measurements to Gaussian quantum operations and arbitrary classical
computations (GOCC), and the important relaxation of this class to measurements
with non-negative Wigner functions (W+). 
We use these in Section~\ref{sec:GOCC-vs-ALL} to analyze the optimal GOCC
measurement to distinguish two coherent states, reproducing (with a conceptually
much simpler proof) a result of Takeoka and Sasaki~\cite{TakeokaSasaki}.
The GOCC distinguishability of any two distinct coherent states is always a 
little, but always strictly worse than the optimal distingishability 
according to Helstrom \cite{Helstrom,Holevo:dist}. 
Motivate by this, in Section~\ref{sec:hiding} we exhibit examples of multimode
states, each a mixture of coherent states (hence ``classical'' in the quantum-optical sense \cite{Glauber,Sudarshan}
and in particular preparable by GOCC), whose GOCC distinguishability is 
exponentially small while they are almost perfectly distinguishable under the 
optimal Holevo-Helstrom measurement. 
From the other side, there are lower limits to how indistinguishable two 
orthogonal states on $n$ quantum harmonic modes and with bounded energy can be,
which we show for W+ measurements and conjecture for GOCC measurements
(Section~\ref{sec:lower-bounds}).
We conclude in Section~\ref{sec:conclusions}.

\section{Bosonic Gaussian formalism}
\label{sec:gaussian}
We briefly review the formalism of Bosonic systems and Gaussian states, 
which has been laid out in many review articles and textbooks, such 
as \cite{Weedbrook-et-al} and \cite{Barnett,KokLovett}, which two emphasise the 
quantum information aspect. 
For our particular choice of normalisations, see \cite{cahill}. 

Each elementary system, called a (harmonic) mode, is characterised by 
a pair of canonical variables $x$ and $p$, satisfying the canonical 
commutation relation (CCR) $[x,p]=i$ (customarily choosing units where $\hbar=1$) 
and generating the CCR algebra of Heisenberg and Weyl. 
By the Stone-von-Neumann theorem, each irreducible representation of 
this algebra on a separable Hilbert space $\cH$ is isomorphic to the 
usual position and momentum operators $x$ and $p$, respectively. 

It is convenient to introduce the annihilation and creation operators
\begin{equation}
  \label{eq:anni+crea}
  a = \frac{x+ip}{\sqrt{2}}, \quad a^\dagger = \frac{x-ip}{\sqrt{2}}, 
\end{equation}
respectively. They can be used to define the number operator,
\begin{equation}
  \label{eq:N}
  N = a^\dagger a = \frac12 (x^2+p^2) - \frac12 = \sum_{n=0}^\infty n\,\proj{n}, 
\end{equation}
which up to the energy shift of $-\frac12$ to bring the ground state energy 
to zero, is equivalent to the quantum harmonic Hamiltonian (at fixed
frequency), and has precisely the non-negative integers as eigenvalues; 
the eigenstates are known as Fock states or number states, $\ket{n}$. 
In the number basis,
\begin{equation}
  \label{eq:anni+crea:N}
  a         = \sum_{n=0}^\infty \sqrt{n}\,\ketbra{n\!-\!1}{n}, 
  \quad 
  a^\dagger = \sum_{n=0}^\infty \sqrt{n}\,\ketbra{n}{n\!-\!1}. 
\end{equation}
All these operators are unbounded, and one might have justified 
hesitations against the algebraic operations performed above. The 
established solution to all of the potential problems associated 
to the unboundedness and associated restricted domains is to pass 
to the displacement operators,
\begin{equation}
  \label{eq:D-alpha}
  D(\alpha) = e^{\alpha a^\dagger-\overline{\alpha} a}, \text{ for } \alpha\in \CC,
\end{equation}
which are \emph{bona fide} unitaries, hence bounded operators. 

So far, we have discussed our quantum system at hand as if it were a single 
mode, but we can  of course consider multi-mode systems, which again by the
Stone-von-Neumann theorem are characterised uniquely as irreducible 
representations of the CCR algebra generated by $x_1,\ldots,x_m$ and
$p_1,\ldots,p_m$ such that $[x_j,p_k] = i\delta_{jk}$. This means that its
Hilbert space can be identified with $\cH_1\ox\cdots\ox\cH_m$, where
$\cH_j$ is the Hilbert space of the $j$-th mode, carrying the representation 
of $x_j$ and $p_j$. In particular, each mode has its own annihilation 
operator $a_j$ and displacement operator $D(\alpha)$; for an $m$-tuple 
$\underline{\alpha} = (\alpha_1,\ldots,\alpha_m)$ of displacements,
we write $D(\underline{\alpha}) = D(\alpha_1)\ox\cdots\ox D(\alpha_m)$ 
for the $m$-mode displacement operator. The subspace spanned by these
operators is dense in the bounded operators $\cB(\cH)$ on the Hilbert space. 

For a general density operator $\rho \in \cS(\cH) = \{\rho\geq 0,\ \tr\rho=1\}$, 
or more generally for a trace class operator, the characteristic 
function is defined as
\begin{equation}
  \chi_\rho(\underline{\alpha}) := \tr \rho D(\underline{\alpha}).
\end{equation}
This is a bounded complex function, uniquely specifying $\rho$. A state 
$\rho$ is called Gaussian if its characteristic function $\chi_\rho$ is 
of Gaussian form. 
For our purposes, we need another, particularly useful representation of the 
state as a quasi-probability function, the so-called Wigner 
function $W_\rho$ \cite{Wigner,HOCSW}, see also \cite{Barnett,cahill} for many 
fundamental and useful relations such as the following two. 
It is defined as the (multidimensional complex) Fourier transform of the 
characteristic function $\chi_\rho$, 
\begin{equation}
  \label{eq:Wigner-Fourier}
  W_\rho(x,p) 
    = \left(\frac{1}{2\pi^2}\right)^m 
      \int {\rm d}^{2m}\underline{\xi}\,
                  e^{\underline{\alpha}\cdot\underline{\xi}^\dagger
                     -\underline{\xi}\cdot\underline{\alpha}^\dagger}
                  \chi_\rho(\underline{\xi}),
\end{equation}
where we reparametrise the argument in phase space coordinates, 
$\alpha_j = \frac{1}{\sqrt{2}}(x_j+ip_j)$, 
and $\underline{\alpha}\cdot\underline{\xi}^\dagger = \sum_j \alpha_j\overline{\xi}_j$
is the Hermitian inner product of the complex coordinate tuples. 
This is a real-valued function, and its normalisation is chosen 
in such a way that 
\begin{equation}
  \int {\rm d}^m x{\rm d}^m p\, W_\rho(x,p) = \tr\rho,
\end{equation}
hence for a state we can address it as a quasi-probability function
as it integrates to $1$, 
and if the Wigner function is positive it is a genuine probability density.
In general, can be expressed as an expectation value, cf. \cite{Barnett,cahill}, 
\begin{equation}
  \label{eq:Wigner-expectation}
  W_\rho(x,p)
    = \pi^{-m} \tr \rho D(\underline{\alpha})(-1)^{N_1+\ldots+N_m}D(\underline{\alpha})^\dagger.
\end{equation}
where as before $\alpha_j = \frac{1}{\sqrt{2}}(x_j+ip_j)$.
It shows that $W_\rho$ is well-defined and indeed a continuous 
bounded function for all trace class operators: indeed, 
$|W_\rho(x,p)| \leq \pi^{-m} \|\rho\|_1$.
The above formula can be used to give meaning to more general operators (such 
as POVM elements); for instance the Wigner function of the identity operator 
is a constant, $W_\1 = (2\pi)^{-m}$.
The Wigner transformation preserves the Frobenius (Hilbert-Schmidt) inner
product,
\begin{equation}
  \label{eq:Frobenius}
  \tr\rho\sigma = (2\pi)^m \int {\rm d}^m x {\rm d}^m p\, W_\rho(x,p)W_\sigma(x,p).
\end{equation}

Unitary transformations of the Hilbert space preserve the canonical 
commutation relations; but the subset of unitaries that map the Lie algebra 
of the canonical variables, which is $\operatorname{span}\{\1,x_j,p_k\}$,
to itself, are called \emph{Gaussian} unitaries. We address them also 
as ``linear'' transformations, since they are correspond to an affine linear 
map of phase space, and are described by a displacement vector and a symplectic matrix.
Gaussian channels are precisely the completely positive and trace preserving 
(cptp) maps taking Gaussian states to Gaussian states. It is a fundamental fact 
that a quantum channel $\cN:\cS(\cH)\rightarrow\cS(\cH')$ is Gaussian if and 
only if it has a Gaussian unitary dilation $U$, with the environment initialised 
in the vacuum state:
\begin{equation}
  \cN(\rho) = \tr_E U\left(\rho\ox\proj{0}^{\ox\ell}\right)U^\dagger,
\end{equation}
where the environment has $\ell$ modes.

For the following, we need the (Glauber-Sudarshan) coherent states, also 
known as minimal dispersion states $\ket{\alpha}$, which are eigenstates 
of the annihilation operator: $a\ket{\alpha} = \alpha\,\ket{\alpha}$, for 
$\alpha\in\CC$. This defines the states uniquely, and one can show that
they are related by displacements: $\ket{\alpha} = D(\alpha)\ket{0}$,
where $\ket{0}$ is both the coherent state corresponding to $\alpha=0$
and the vacuum, i.e. the ground state of the Hamiltonian, in other 
words the zeroth Fock state. In the Fock basis,
\begin{equation}
  \label{eq:coherentstate}
  \ket{\alpha} = e^{-\frac12 |\alpha|^2} \sum_{n=0}^\infty \frac{\alpha^n}{\sqrt{n!}}\ket{n},
\end{equation}
a relation that reassuringly shows that the coherent states are 
well-defined unit vectors, written in a genuine orthonormal basis. 
However, what is more relevant are the
following expressions for the first and second moments. 
For $\alpha = \alpha_R + i\alpha_I$ written in terms of real and imaginary
parts, 
\begin{align}
  \bra{\alpha} x \ket{\alpha} &= \alpha_R\sqrt{2},\ \bra{\alpha} p \ket{\alpha} = \alpha_I\sqrt{2},\\
  \bra{0} x^2 \ket{0}         &= \bra{0} p^2 \ket{0} = \frac12,
\end{align}
the latter ``vacuum fluctuations'' consistent with the Heisenberg-Robertson 
uncertainty relation. Furthermore, the inner product, easily confirmed from 
the expansion in the Fock basis,
\begin{equation}
  |\bra{\alpha}\beta\rangle|^2 = e^{-|\alpha-\beta|^2}.
\end{equation}
And finally, we record
\begin{equation}
  \label{eq:hetero}
  \frac{1}{\pi} \int {\rm d}\alpha\, \proj{\alpha} = \1,
\end{equation}
showing that the family of operators $\frac{{\rm d}\alpha }{\pi}\proj{\alpha}$
forms a POVM, known as heterodyne measurement.

\section{State discrimination by \protect\\ Gaussian measurements}
\label{sec:GOCC}
Now that we have the Bosonic formalism in place, we can discuss the 
problem of binary hypothesis testing under Gaussian restrictions on the
measurement. 
Indeed, going back to Eqs.~(\ref{eq:HelstromHolevo}) and (\ref{eq:tracenorm})
in the introduction, almost any restriction $\mathbb{M}$
on the set of possible measurements, be they physically motivated or 
purely mathematical, results in a larger error probability than the Helstrom 
expression, which is most conveniently expressed in terms of a distinguishability 
norm on states: 
\begin{equation}
  \label{eq:M-norm}
  \min_{(M,\1-M)\in\mathbb{M}} P_e 
                =: \frac12 \left( 1-\frac12\|\rho_0-\rho_1\|_{\mathbb{M}} \right).
\end{equation}
How to define the set $\mathbb{M}$ appropriately and what exactly is 
necessary for it to define a norm is explained in detail in \cite{MWW}. 
An example exceedingly well-studied in quantum information theory is 
the set of measurements implemented by local operations and classical 
communication (LOCC) in a bi- or multi-partite system, as well as its 
relaxations separable POVM elements (SEP) and POVM elements with positive 
partial transpose (PPT), cf.~\cite{MWW,LOCC:always}

Here, we consider restrictions motivated from the fact that Gaussian 
operations are distinguished among the ones allowed by quantum mechanics 
generally, following Takeoka and Sasaki \cite{TakeokaSasaki}. Concretely,
we are interested in the measurements 
implemented by any sequence of partial Gaussian POVMs and classical
feed-forward (Gaussian operations and classical computation, GOCC).
Very much like LOCC, there is no concise way of writing down a general 
GOCC transformation, but for a binary measurement the prescription is
as follows.

\begin{definition}
\label{defi:GOCC}
A \emph{GOCC measurement protocol} on $m$ modes consists of the repetition of the 
following steps, for $r=1,\ldots,R$ (``rounds''), 
after initially setting $\xi_0=\emptyset$ and $m_{\emptyset}=m$. 
Here, $\xi_{r-1}$ is the collection of all measurement outcomes prior to
round $r$.
\begin{itemize}
  \item[{(r.1)}] create a number $k_{\xi_{r-1}}$ of Bosonic modes in the vacuum state;

  \item[{(r.2)}] perform a Gaussian unitary $U_{\xi_{r-1}}$ on the 
     $m_{\xi_{r-1}}+k_{\xi_{r-1}}$ modes; 

  \item[{(r.3)}] perform homodyne detection on the last $\ell_{\xi_{r-1}}$ modes, 
     keeping the first $m_{\xi_{r}} := m_{\xi_{r-1}}+k_{\xi_{r-1}}-\ell_{\xi_{r-1}}$;
     call the outcome $\underline{x}^{(r)} = x_1^{(r)}\ldots x_{\ell_{\xi_{r-1}}}^{(r)}$
     and set $\xi_{r} := \{\xi_{r-1},\underline{x}^{(r)}\}$. 
\end{itemize}

Each $r$ is called a ``round'', and in the $R$-th round all remaining 
modes are measured, i.e. $\ell_{\xi_{r-1}} = m_{\xi_{r-1}}+k_{\xi_{r-1}}$. 
The final measurement outcome is a measurable function $f(\xi_R) \in \Omega$, 
taking values in a prescribed set $\Omega$, 
which for simplicity we assume to be discrete.

This defines a POVM $(M_\omega:\omega\in\Omega)$, and every POVM
that arises in the above way, or as a limit of such POVMs in the 
strong 
topology is called a \emph{GOCC POVM}.
\end{definition}

\medskip
Now, returning to equiprobable hypotheses $\rho_0$ and $\rho_1$,
and enforcing the POVMs to be implemented by GOCC protocols,
we arrive at the GOCC norm:
\begin{equation}
  \label{eq:GOCC-norm}
  \inf_{(M,\1-M)\atop \text{GOCC POVM}} P_e 
                =: \frac12 \left( 1-\frac12\|\rho_0-\rho_1\|_\GOCC \right).
\end{equation}
Note that $\|\cdot\|_\GOCC$ is indeed a norm, since the set of 
GOCC measurements is tomographically complete. Indeed, heterodyne detection 
on every available mode is a tomographically complete measurement, 
meaning that for every pair of distinct quantum states, there exists
a binary coarse graining of the heterodyne detection outcomes 
that discriminates the states with some non-zero bias. 

\begin{remark}
In the definition of a GOCC protocol, we could have allowed the 
$k_{\xi_{r-1}}$ ancillary modes to be prepared in any Gaussian state
in step (r.1), 
but that does not add any more generality, since every Gaussian state can 
be prepared from the vacuum by a suitable Gaussian unitary.
We could also have allowed an arbitrary Gaussian quantum channel in 
step (r.2), but again that does not add any more generality since 
every Gaussian channel has a dilation to a Gaussian unitary with 
an environment prepared in the vacuum state. 
Finally, in step (r.3) we could have allowed any Gaussian measurement, 
but every Gaussian measurement can be implemented by adjoining suitable 
ancilla modes in the vacuum, performing a Gaussian unitary and a homodyne 
measurement. 

From the point of view of the discussion of classes of operations, 
of which measurements are a special case, it is interesting to distinguish 
certain subclasses of GOCC: what we actually have defined are the 
measurements implemented by a GOCC protocol with finitely many rounds,
as well as the closure of this set. One could also define the POVMs 
implemented by a GOCC protocol with unbounded rounds (but probability 
1 to stop), which would sit between the former two, 
cf.~\cite{LOCC:always} for the case of LOCC. While it is interesting 
to study these three classes, in particular whether they coincide or 
are separated (as they are in the analogous case of LOCC \cite{LOCC:always})
this is beyond the scope of the present work. Indeed, for the case of 
hypothesis testing, thanks to the infimum in the error probability, 
all three classes will give rise to the same GOCC norm. 
\end{remark}

\medskip
An elementary observation about GOCC is that the fine-grained measurement 
(i.e. before coarse-graining to a discrete POVM) consists of operators
each of which is a positive scalar multiple of a Gaussian pure state. 
In particular, they have non-negative Wigner function, and because the
coarse-graining amounts to summing POVM elements, also the final 
POVM has non-negative Wigner functions. 
We thus call a binary POVM $(M,\1-M)$ with non-negative Wigner functions 
$W_{M}$ and $W_{\1-M}$ a \emph{W+ POVM}, and denote their set $\mathbb{W}_+$. 

Just as the restriction to GOCC leads to the distinguishability 
norm $\|\cdot\|_\GOCC$ [Eq.~(\ref{eq:GOCC-norm})], the restriction to 
W+ POVMs gives rise to the distinguishability norm $\|\cdot\|_\Wplus$:
\begin{equation}
  \label{eq:Wplus-norm}
  \inf_{(M,\1-M)\atop \text{W+ POVM}} P_e 
                =: \frac12 \left( 1-\frac12\|\rho_0-\rho_1\|_\Wplus \right).
\end{equation}

Since every GOCC measurement is automatically W+, we have by definition
\begin{equation}
  \label{eq:tower}
  \|\rho_0-\rho_1\|_\GOCC \leq \|\rho_0-\rho_1\|_\Wplus \leq \|\rho_0-\rho_1\|_1. 
\end{equation}
The rest of the paper is concerned with the comparison of these norms.
The questions guiding us are: are they different, and how large are the gaps?

\section{Separation between GOCC and unrestricted measurements}
\label{sec:GOCC-vs-ALL}
Our first result shows a simple upper bound on the GOCC and W+ 
distinguishability norms in terms of the Wigner functions of the 
two states. 

\begin{lemma}
  \label{lemma:W+vs-L1}
  For any two states $\rho_0$ and $\rho_1$ of an $m$-mode system,
  with associated Wigner functions $W_0$ and $W_1$, respectively,
  \[\begin{split}
    \| \rho_0-\rho_1 \|_\GOCC &\leq \| \rho_0-\rho_1 \|_\Wplus           \\
                              &\leq \| W_0-W_1 \|_{L^1}                  \\
                              &=    \int {\rm d}^m x{\rm d}^m p\, |W_0(x,p)-W_1(x,p)|.
  \end{split}\]
\end{lemma}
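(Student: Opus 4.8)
The plan is to concentrate on the middle inequality $\|\rho_0-\rho_1\|_\Wplus \leq \|W_0-W_1\|_{L^1}$, since the first inequality is exactly Eq.~(\ref{eq:tower}) and the last line is only the definition of the $L^1$ norm. If $\|W_0-W_1\|_{L^1}=\infty$ there is nothing to prove, so I would assume $W_0-W_1\in L^1$. Unwinding the definition (\ref{eq:Wplus-norm}) together with (\ref{eq:HelstromHolevo}) gives $\frac12\|\rho_0-\rho_1\|_\Wplus = \sup_M \tr(\rho_0-\rho_1)M$, where $M$ ranges over all operators for which $(M,\1-M)$ is a W+ POVM; it therefore suffices to bound $\bigl|\tr(\rho_0-\rho_1)M\bigr|$ for a single such $M$.

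First I would record the pointwise constraint on the Wigner function of a W+ POVM element. Because the Wigner transform is linear and $M+(\1-M)=\1$, one has $W_M + W_{\1-M} = W_\1 = (2\pi)^{-m}$; combined with $W_M\geq 0$ and $W_{\1-M}\geq 0$ this forces $0\leq W_M\leq (2\pi)^{-m}$, and in particular $|W_M-W_{\1-M}|\leq W_M+W_{\1-M}=(2\pi)^{-m}$ throughout phase space. Next I would pass to Wigner functions in the overlap: extending the Frobenius identity (\ref{eq:Frobenius}) to the pairing of the trace-class operator $\rho_0-\rho_1$ with the bounded operator $M$, $\tr(\rho_0-\rho_1)M = (2\pi)^m\int {\rm d}^m x{\rm d}^m p\,(W_0-W_1)W_M$, and likewise for $\1-M$. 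Subtracting these and using $\tr(\rho_0-\rho_1)\1 = \tr\rho_0-\tr\rho_1 = 0$ yields the symmetric representation
\[
  \tr(\rho_0-\rho_1)M
    = \frac{(2\pi)^m}{2}\int {\rm d}^m x{\rm d}^m p\,(W_0-W_1)(W_M-W_{\1-M}).
\]

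Then the triangle inequality for the integral together with the pointwise bound $|W_M-W_{\1-M}|\leq(2\pi)^{-m}$ gives $\bigl|\tr(\rho_0-\rho_1)M\bigr| \leq \frac12\int {\rm d}^m x{\rm d}^m p\,|W_0-W_1| = \frac12\|W_0-W_1\|_{L^1}$; taking the supremum over all W+ POVMs $(M,\1-M)$ then delivers $\|\rho_0-\rho_1\|_\Wplus\leq\|W_0-W_1\|_{L^1}$. The one point needing care — the main obstacle — is the justification of the overlap formula beyond the trace-class $\times$ trace-class case in which (\ref{eq:Frobenius}) was stated: one must check that $\tr(\rho_0-\rho_1)M = (2\pi)^m\int(W_0-W_1)W_M$ persists when $M$ is merely bounded. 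This is exactly where the standing assumption $W_0-W_1\in L^1$ is used, together with the uniform bound $W_M\in L^\infty$: approximating $M$ in the strong topology by trace-class operators (or truncating in the Fock basis) and invoking dominated convergence lets one pass to the limit, the integrals converging absolutely throughout. All the remaining steps are elementary.
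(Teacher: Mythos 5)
Your proof is correct and follows essentially the same route as the paper's: both rest on the Frobenius identity~(\ref{eq:Frobenius}) applied to $W_M$, the pointwise bound $0\leq (2\pi)^m W_M\leq 1$ coming from positivity of both POVM elements, and the normalisation $\int (W_0-W_1)=0$ to centre the response function at $\frac12$ (your symmetric $(W_M-W_{\1-M})/2$ representation is exactly this centering). The only difference is that you explicitly justify extending~(\ref{eq:Frobenius}) to the trace-class--times--bounded pairing, a point the paper's proof passes over silently.
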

Note that, unlike the inequalities (\ref{eq:tower}), the third term 
in the chain is not a trace norm of density matrices, but an $L^1$ norm 
of real functions, which we may interpret as generalised densities.

\begin{proof}
Only the second inequality remains to be proved. Consider 
any W+ POVM $(M,\1-M)$, meaning that the stochastic response functions 
$F = (2\pi)^m W_M$ and $1-F = (2\pi)^m W_{\1-M}$
are bounded between $0$ and $1$. By the Frobenius inner product 
formula for the Wigner function, Eq.~(\ref{eq:Frobenius}), we have
\begin{equation}
  \label{eq:W-expectation}
  \tr(\rho_0-\rho_1)M = \int {\rm d}^m x{\rm d}^m p\, \bigl(W_0(x,p)-W_1(x,p)\bigr)F(x,p),
\end{equation}
where the left hand side appears in Eq.~(\ref{eq:HelstromHolevo}), its 
supremum over W+ POVMs being $\frac12\|\rho_0-\rho_1\|_\Wplus$; while 
the right hand side is upper bounded by $\frac12 \| W_0-W_1 \|_{L^1}$, 
where we made use of the fact that
$\int {\rm d}^m x{\rm d}^m p\, (W_0(x,p)-W_1(x,p)) = \tr(\rho_0-\rho_1) = 0$.
\end{proof}

\begin{remark}
The lemma assumes measurements with W+ POVMs, but it gives interesting 
information also in cases where the POVM has some limited Wigner negativity.
Namely, looking at Eq.~(\ref{eq:W-expectation}), we subsequently use that 
$0 \leq F(x,p) \leq 1$, which is the property W`+ of the measurement.

If we do not have ``too much'' Wigner negativity in the measurement operators, 
this could be expressed by a bound $|2F(x,p)-1| \leq B$, and then we 
would get 
\begin{equation}
  \bigl| \tr(\rho_0-\rho_1)M \bigr| \leq \frac{B}{2} \| W_0-W_1 \|_{L^1}.
\end{equation}
The right hand side can still be small when the $L^1$-distance is really
small, and at the same time $B$ not too large. In the next section we 
shall see an example of this.
\end{remark}

\medskip
As one might expect, the inequality in Lemma \ref{lemma:W+vs-L1} is often crude, 
or even trivial since one can find states where the right had side exceeds $2$.
However, if $\rho_0$ and $\rho_1$ are both states with non-negative
Wigner function, for instance probabilistic mixtures of Gaussian states, 
then $W_0$ and $W_1$ are \emph{bona fide} probability densities, 
and the right hand side is $\leq 2$. 
In that case, we have the following corollary for the quantum Chernoff 
coefficient when measurements are restricted to GOCC or W+ POVMs.
Recall that the Chernoff coefficient is the exponential rate of the 
minimum error probability in distinguishing two i.i.d. hypotheses. 
I.e., in the case of two quantum states
\begin{equation}
  \xi(\rho_0,\rho_1) 
     := \lim_{n\rightarrow\infty} 
           - \frac1n \ln \left(1-\frac12\left\|\rho_0^{\ox n}-\rho_1^{\ox n}\right\|_1\right),
\end{equation}
which generalises the analogous question for probability distributions
\cite{Chernoff}. 
Amazingly, there is a formula for this exponent \cite{q-Chernoff}, 
generalising in its turn the classical answer \cite{Chernoff}:
\begin{equation}
  \xi(\rho_0,\rho_1) = -\ln \inf_{0<s<1} \tr\rho_0^s\rho_1^{1-s}.
\end{equation}

Just as the distinguishability norm under a restriction, we can then 
define the constrained Chernoff coefficient, if the restriction 
$\mathbb{M}$ describes a subset of POVMs for each number of elementary systems:
\begin{equation}
  \xi_{\mathbb{M}}(\rho_0,\rho_1) 
     := \lim_{n\rightarrow\infty} 
           - \frac1n \ln \left(1-\frac12\left\|\rho_0^{\ox n}-\rho_1^{\ox n}\right\|_{\mathbb{M}}\right).
\end{equation}

\begin{corollary}
  \label{cor:Chernoff}
  For two states $\rho_0$ and $\rho_1$ with non-negative Wigner functions 
  $W_0,\, W_1 \geq 0$ (meaning that they are probability density functions),
  \[
    \xi_\GOCC(\rho_0,\rho_1) \leq \xi_\Wplus(\rho_0,\rho_1) 
                             \leq \xi(W_0,W_1),
  \]
  where according to Chernoff's theorem \cite{Chernoff}, 
  \[
    \xi(W_0,W_1) = -\ln \inf_{0<s<1} \int {\rm d}^m x{\rm d}^m p\, W_0(x,p)^s W_1(x,p)^{1-s}
  \]
  is the classical Chernoff coefficient of the probability distributions 
  $W_0$ and $W_1$. 
  \qed
\end{corollary}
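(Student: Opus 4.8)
The plan is to reduce both inequalities to Lemma~\ref{lemma:W+vs-L1} applied to the $n$-fold tensor powers, combined with the classical Chernoff theorem. The leftmost inequality is immediate from the tower~(\ref{eq:tower}): for every $n$ one has $\|\rho_0^{\ox n}-\rho_1^{\ox n}\|_\GOCC \leq \|\rho_0^{\ox n}-\rho_1^{\ox n}\|_\Wplus$, so $1-\tfrac12\|\cdot\|_\GOCC \geq 1-\tfrac12\|\cdot\|_\Wplus$; applying $-\tfrac1n\ln(\cdot)$ reverses this inequality, and passing to the limit gives $\xi_\GOCC(\rho_0,\rho_1) \leq \xi_\Wplus(\rho_0,\rho_1)$.

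For the second inequality the key structural fact is that the Wigner function is multiplicative under tensor products: since $\chi_{\rho_i^{\ox n}} = \chi_{\rho_i}^{\ox n}$ and the Fourier transform~(\ref{eq:Wigner-Fourier}) factorises accordingly, the Wigner function of $\rho_i^{\ox n}$ is the product $W_i^{\ox n}(x^{(1)},p^{(1)},\ldots,x^{(n)},p^{(n)}) = \prod_{j=1}^n W_i(x^{(j)},p^{(j)})$ for $i=0,1$. Because $W_0,W_1\geq 0$ are probability densities by hypothesis, $W_0^{\ox n}$ and $W_1^{\ox n}$ are precisely the $n$-fold i.i.d.\ product distributions. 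Applying Lemma~\ref{lemma:W+vs-L1} to the $nm$-mode states $\rho_0^{\ox n}$ and $\rho_1^{\ox n}$ then gives
\[
  1 - \tfrac12\bigl\|\rho_0^{\ox n}-\rho_1^{\ox n}\bigr\|_\Wplus
  \;\geq\; 1 - \tfrac12\bigl\|W_0^{\ox n}-W_1^{\ox n}\bigr\|_{L^1}.
\]

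Next I would invoke Chernoff's theorem~\cite{Chernoff} for classical i.i.d.\ hypothesis testing with the densities $W_0,W_1$: since $\tfrac12\|W_0^{\ox n}-W_1^{\ox n}\|_{L^1}$ is the total variation distance of the product distributions and $\tfrac12\bigl(1-\|W_0^{\ox n}-W_1^{\ox n}\|_{L^1}\bigr)$ the minimum Bayes error, the right-hand side above decays exponentially with
\[
  \lim_{n\to\infty} -\tfrac1n \ln\!\left(1-\tfrac12\bigl\|W_0^{\ox n}-W_1^{\ox n}\bigr\|_{L^1}\right)
  = -\ln\inf_{0<s<1}\int\! {\rm d}^m x\,{\rm d}^m p\; W_0^{s} W_1^{1-s}
  = \xi(W_0,W_1).
\]
Taking $-\tfrac1n\ln(\cdot)$ of the displayed inequality and then $\limsup_{n\to\infty}$ on the left yields $\xi_\Wplus(\rho_0,\rho_1)\leq\xi(W_0,W_1)$, and combining with the first part closes the chain.

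I do not expect a serious obstacle; the points needing care are bookkeeping ones. First, one must justify the multiplicativity of the Wigner function under tensor products, so that Lemma~\ref{lemma:W+vs-L1} genuinely produces the product distributions $W_i^{\ox n}$ on its right-hand side. Second, the constrained exponents $\xi_\GOCC,\xi_\Wplus$ are a priori defined by limits that we only bound from above; one may either state the conclusion with $\limsup$, or note that these limits exist (e.g.\ by a supermultiplicativity argument for the corresponding fidelity-like quantities, analogous to the classical case, or simply because only the upper bound is asserted). Finally, the classical Chernoff statement is unconditional here since $\int W_0^{s} W_1^{1-s}\leq 1$ by H\"older for all $s\in(0,1)$, so no integrability pathology arises even when $W_0,W_1$ have unbounded support or heavy tails.
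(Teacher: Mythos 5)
Your proof is correct and follows exactly the route the paper intends (the corollary is stated with an immediate \qed precisely because it is Lemma~\ref{lemma:W+vs-L1} applied to the $nm$-mode tensor powers, using multiplicativity of the Wigner function, followed by the classical Chernoff theorem for the product densities $W_i^{\ox n}$). Your added care about the existence of the limits and the integrability of $\int W_0^s W_1^{1-s}$ is sound but not needed beyond what you say.
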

As in Lemma \ref{lemma:W+vs-L1}, the third term in the chain is not a quantity 
of density matrices, but of classical probability densities. 


\medskip
It is not difficult to find examples where the bounds of the Lemma and its
Corollary are exactly tight, among them the case of two coherent states
studied originally by Takeoka and Sasaki \cite{TakeokaSasaki}.

\begin{example}
  \label{example:takeoka-sasaki}
  Consider $\rho_0$ and $\rho_1$ as two coherent states of a single mode, 
  say $\rho_0=\proj{+\alpha}$, $\rho_1=\proj{-\alpha}$ for $\alpha > 0$. 
  Then, 
  \begin{equation}
    \frac12 \|\rho_0-\rho_1\|_1 = \sqrt{1-e^{-4\alpha^2}}.
  \end{equation}
  while by Lemma \ref{lemma:W+vs-L1},
  \begin{equation}
    \frac12 \|\rho_0-\rho_1\|_\GOCC = \frac12 \|\rho_0-\rho_1\|_\Wplus
                                    = \operatorname{erf}(\alpha\sqrt{2}),
  \end{equation}
  with the error function 
  $\operatorname{erf}(x) = \frac{2}{\sqrt\pi} \int_0^x {\rm d}x\,e^{-x^2}$.
  The equality follows from homodyning the $x$-coordinate and
  deciding depending on the sign of the measurement outcome. 
  The norms are compared in Fig.~\ref{fig:GOCC-vs-tr}.

  Furthermore, in the asymptotic i.i.d. setting of the Chernoff bound, 
  \begin{equation}
    \xi(\rho_0,\rho_1) = -\ln F(\rho_0,\rho_1)^2 = 4 |\alpha|^2,
  \end{equation}
  while by Corollary \ref{cor:Chernoff},
  \begin{equation}
    \xi_\GOCC(\rho_0,\rho_1) = \xi_\Wplus(\rho_0,\rho_1) = 2 |\alpha|^2.
  \end{equation}
  The equality follows from homodyning each mode separately in the $x$ 
  direction, and classical post-processing.
\end{example}

\begin{figure}[ht]
  \includegraphics[width=7.5cm]{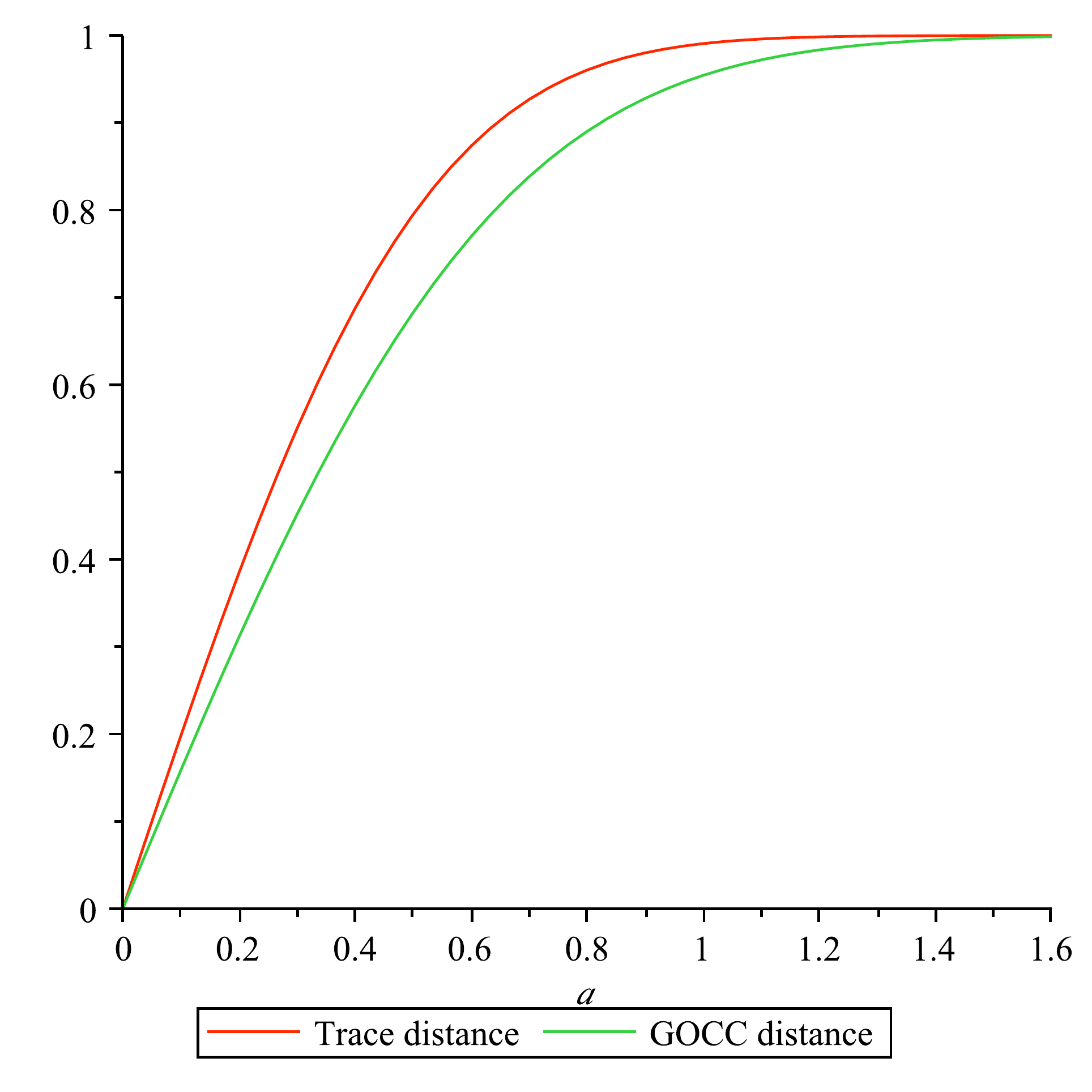}
  \caption{Plot of the trace distance (red) versus the GOCC distance (green)
           against $\alpha$ on the horizontal axis. While there is a
           nonzero gap for all $\alpha > 0$, it vanishes for asymptotically
           large and small displacements, as expected. 
           The largest difference between $\frac12 \|\rho_0-\rho_1\|_1$ and 
           $\frac12 \|\rho_0-\rho_1\|_\GOCC$ is $\approx 0.11$, occurring
           at $\alpha \approx 0.45$.}
  \label{fig:GOCC-vs-tr}
\end{figure}

\begin{example}  
  More generally, for any one-mode Gaussian state and its displacement 
  along one of the principal axes of the covariance matrix,
  \begin{equation}
    \|\rho_0-\rho_1\|_\GOCC = \|\rho_0-\rho_1\|_\Wplus
                            = \|W_0-W_1\|_{L^1},
  \end{equation}
  and the latter can be expressed in terms of the error function
  and the shared variance of the two states in the direction of the 
  displacement connecting them. 

  The equality follows from homodyning in the direction of the 
  line connecting the two first moment vectors in phase space, and deciding 
  depending on which of the two points is closer to the outcome. 
\end{example}

\section{Data hiding secure against Gaussian attacker}
\label{sec:hiding}
As soon as we realize that it is possible to get large gaps between 
$\|\cdot\|_1$ and $\|\cdot\|_\GOCC$, we have to ask ourselves, just how 
large the gap can be.
In particular, is it possible to find state pairs which are
almost maximally distant in the trace norm, yet almost indistinguishable 
in the GOCC norm? 
In other words, can we protect the information against an adversary who 
attempts the hypothesis testing on the two states but with only access
to Gaussian operations and classical communication? This is the definition 
of data hiding, first explored 
in the context of the LOCC restriction, 
and then later abstractly for an arbitrary restriction on the possible 
measurements. 

Next we shall show that data hiding is possible also under GOCC, 
at least when going to multiple modes. 

\begin{theorem}
  \label{thm:GOCC-hiding}
  Let $\overline{E} > 0$. Then, there is a constant $c>0$ such 
  that for all sufficiently large integers $m$ there exist $m$-mode states 
  $\rho_0$ and $\rho_1$, each a mixture of a finite set of coherent states 
  and with average energy (photon number) per mode bounded by $\overline{E}+o(1)$, 
  such that
  \begin{align}
    \label{eq:random-tracenorm}
    \frac12 \| \rho_0\!-\!\rho_1 \|_1 &\geq 1-e^{-cm}, \\
    \label{eq:random-Wplusnorm}
    e^{-cm} &\geq \frac12 \| \rho_0\!-\!\rho_1 \|_\Wplus 
             \geq \frac12 \| \rho_0\!-\!\rho_1 \|_\GOCC.
  \end{align}
\end{theorem}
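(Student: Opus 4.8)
The plan is to build $\rho_0$ and $\rho_1$ as two probabilistic mixtures of product coherent states, designed so that (i) they are nearly orthogonal as quantum states, hence trace-distinguishable, but (ii) their Wigner functions—which are genuine probability densities, since mixtures of coherent states have non-negative Wigner function—are nearly identical in $L^1$, so that Lemma \ref{lemma:W+vs-L1} forces the $\Wplus$ and $\GOCC$ norms to be exponentially small. The Wigner function of a coherent state $\ket{\beta}$ is the Gaussian $W_{\proj{\beta}}(x,p) = \pi^{-1} e^{-(x-\sqrt2\beta_R)^2-(p-\sqrt2\beta_I)^2}$, i.e. a unit-variance Gaussian centred at the point corresponding to $\beta$; so a mixture $\rho = \sum_j q_j \proj{\beta_j}$ has Wigner function $W_\rho = \sum_j q_j W_{\proj{\beta_j}}$, which is exactly the density of the classical mixture distribution $\sum_j q_j \delta_{\beta_j}$ convolved with a fixed Gaussian of width $\tfrac12$ per quadrature. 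Thus $\|W_0 - W_1\|_{L^1}$ is the total variation distance between the two \emph{Gaussian-smoothed} classical distributions on phase space $\RR^{2m}$.

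The key idea is then classical: I want two distributions $P_0, P_1$ on $\RR^{2m}$ (supported on finitely many points, each coordinate a coherent amplitude of bounded modulus so the energy constraint $\overline E + o(1)$ holds) whose Gaussian smoothings have exponentially small total variation distance, yet whose corresponding \emph{quantum} mixtures are almost orthogonal. Quantum near-orthogonality is cheap: if the $2m$-dimensional amplitude vectors underlying $P_0$ are pairwise far (distance $\gg 1$ coordinatewise-on-average) from those underlying $P_1$, then the coherent states $\ket{\underline\beta}$ have exponentially small pairwise overlaps $e^{-\|\underline\beta - \underline\gamma\|^2}$, which via a standard argument (e.g. Gram-matrix/operator-norm bounds, as in LOCC data hiding constructions) makes $\rho_0$ and $\rho_1$ satisfy $\tfrac12\|\rho_0-\rho_1\|_1 \ge 1 - e^{-cm}$. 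The real work is arranging simultaneously that the smoothed distributions are TV-close. Here I would invoke a channel-coding / random-coding argument (this is where "ideas from classical and quantum Shannon theory" enter): pick a single ``mother'' distribution $\mu$ on phase space per mode with per-mode energy $\le \overline E$, and let $P_0$ (resp. $P_1$) be the empirical distribution over $\sim e^{Rm}$ i.i.d. samples from $\mu^{\otimes m}$; for a rate $R$ slightly below the capacity of the additive-Gaussian-noise channel $\underline\beta \mapsto \underline\beta + \text{(vacuum noise)}$ with input distribution $\mu$, two independent such codebooks produce smoothed output distributions that are exponentially close in TV (their mixtures both approximate the same output distribution $\mu * \mathcal N$), by the standard ``output distributions of good codes look i.i.d.'' phenomenon, while the codewords themselves are almost surely pairwise well-separated (so the quantum states are orthogonal). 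A union bound / Chernoff-type concentration over the random codebook then fixes a single good pair $(\rho_0,\rho_1)$.

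The main obstacle is making the two halves quantitatively compatible: the same separation parameter that guarantees exponentially small coherent-state overlaps (hence trace-norm near $1$) must \emph{not} spoil the exponential TV-closeness after Gaussian smoothing. This is exactly a rate-vs-distance trade-off for a Gaussian channel at finite energy, and it has to be checked that there is a nonempty window—i.e., that the required rate $R$ (enough codewords that, after width-$\tfrac12$ smoothing, the discrete empirical distribution is reconstructed to exponential TV accuracy) is strictly below the minimum-distance/packing threshold that still allows $e^{Rm}$ codewords at pairwise Euclidean distance bounded below. Concretely I expect to choose $\mu$ Gaussian of variance $\overline E$ per quadrature so the output $\mu * \mathcal N$ is Gaussian of variance $\overline E + \tfrac12$, take $R$ a small constant, and verify via Gaussian concentration that (a) the number of codeword pairs within distance $O(1)$ is $o(e^{Rm})$ in expectation, deletable by expurgation, and (b) for $R$ below capacity the two random smoothed mixtures are within $e^{-cm}$ in TV with high probability. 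Finally I would translate: $\tfrac12\|W_0-W_1\|_{L^1} = \|P_0 * \mathcal N - P_1*\mathcal N\|_{\text{TV}} \le e^{-cm}$ gives \eqref{eq:random-Wplusnorm} through Lemma \ref{lemma:W+vs-L1}, and the overlap bound gives \eqref{eq:random-tracenorm}; shrinking $c$ once at the end absorbs all the polynomial and lower-order losses, and the $o(1)$ in the energy comes from the (exponentially unlikely, hence discardable) high-energy tail of the Gaussian codewords.
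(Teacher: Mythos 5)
Your construction and the entire Wigner-function half of the argument coincide with the paper's: random coherent-state codebooks with Gaussian amplitudes of per-mode energy $\overline{E}$, channel resolvability for the classical AWGN channel with noise power $\tfrac12$ per quadrature (so that $W_0$ and $W_1$ are both exponentially close in $L^1$ to the i.i.d.\ output $W_{\gamma(\overline{E})}^{\ox m}$), and Lemma~\ref{lemma:W+vs-L1} to obtain Eq.~(\ref{eq:random-Wplusnorm}). The gap is in the trace-norm half, and it sits exactly at the point you flag but defer: for the method you propose, the rate window is \emph{empty}. Resolvability of the two real AWGN channels per mode forces the codebook rate (nats per mode) to exceed $2\cdot\frac12\ln(1+2\overline{E})=\ln(1+2\overline{E})$. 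On the other side, any pairwise-overlap/Gram-matrix/union-bound argument is governed by the expected pairwise fidelity, and for independent amplitudes $\alpha,\beta$ with $\EE|\alpha|^2=\EE|\beta|^2=\overline{E}$ one computes
\begin{equation*}
  \EE\,|\braket{\alpha}{\beta}|^2 \;=\; \EE\, e^{-|\alpha-\beta|^2} \;=\; \frac{1}{1+2\overline{E}} \;=\; \tr\bigl[\gamma(\overline{E})^2\bigr],
\end{equation*}
so the sum of pairwise fidelities over a codebook of $e^{Rm}$ words is $e^{m(R-\ln(1+2\overline{E}))}$ in expectation and is small only for $R<\ln(1+2\overline{E})$. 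The same threshold appears if you instead bound $F(\rho_0,\rho_1)\le\sqrt{L\,\tr\rho_0\rho_1}$; and expurgation does not help, because the obstruction is not atypically close pairs but the sheer number of \emph{typical} pairs, whose overlap decays at exactly the resolvability rate. Packing threshold and covering threshold coincide, so no $R$ satisfies both requirements, for any $\overline{E}>0$.

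The paper escapes this by not using pairwise overlaps at all: it invokes the HSW coding theorem for the noiseless Bosonic channel, which decodes the codeword label $\lambda$ with exponentially small error at any rate below the Holevo capacity $g(\overline{E})=(\overline{E}+1)\ln(\overline{E}+1)-\overline{E}\ln\overline{E}$, and then coarse-grains the decoder into the binary POVM $(M_0,M_1)$. The whole theorem rests on the strict inequality $\ln(1+2\overline{E})<g(\overline{E})$ for all $\overline{E}>0$ (verified in the paper from $\ln(1+t)\ge t/(1+t)$): that is the nonempty window. In information-theoretic terms, your argument caps the achievable rate at the R\'enyi-2 (collision) entropy of the average state $\gamma(\overline{E})$, whereas the construction needs the full von Neumann entropy, and the gap between the two is precisely what makes the hiding scheme possible. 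To repair your proof, replace the Gram-matrix step by the HSW theorem (or any decoder achieving the Holevo rate); the rest of your outline then goes through essentially as in the paper.
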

\begin{proof}
Consider the $m$-mode coherent states 
$\ket{\underline{\alpha}^{(\lambda)}} 
  = \ket{\alpha^{(\lambda)}_1}\ket{\alpha^{(\lambda)}_2}\cdots\ket{\alpha^{(\lambda)}_m}$ 
($\lambda=1,\ldots,2L$), where the parameters $\alpha^{(\lambda)}_j\in\CC$ are 
chosen i.i.d according to a normal distribution with mean $0$ and variance 
$\EE |\alpha^{(\lambda)}_j|^2 = \overline{E}$. 
Then define 
\begin{equation}\begin{split}
  \rho_0 &= \frac{1}{L} \sum_{\lambda=1}^L \proj{\underline{\alpha}^{(2\lambda)}}, \\
  \rho_1 &= \frac{1}{L} \sum_{\lambda=1}^L \proj{\underline{\alpha}^{(2\lambda-1)}},
\end{split}\end{equation}
so these are random states. Note that with high probability, 
indeed asymptotically converging to $1$ as $m\gg 1$, 
both have their photon number per mode bounded by $\overline{E}+o(1)$.
Also, 
\begin{equation}
  \EE \rho_0 = \EE \rho_1 = \gamma(\overline{E})^{\ox m}, 
\end{equation}
where $\gamma(\overline{E}) = (1-e^{-\beta}) e^{-\beta N}$ is the thermal state 
of a single Bosonic mode of mean photon number $\overline{E}$, 
i.e. with $\beta = \ln\left(1+\frac{1}{\overline{E}}\right)$.

The rest of the proof will consist in showing that we can fix $L$ in such a way 
that with probability close to $1$, $\rho_0$ and $\rho_1$ are distinguishable 
except with exponentially small error probability, and that with probability 
close to $1$, the Wigner functions $W_0$ and $W_1$ are exponentially close to 
$W_{\gamma(\overline{E})}^{\ox m}$, the Wigner function of 
$\gamma(\overline{E})^{\ox m}$, in the total variational distance. 

\medskip
\emph{Eq.~(\ref{eq:random-tracenorm}):} The ensemble of coherent states 
$\ket{\underline{\alpha}^{(\lambda)}}$ is the well-studied random 
coherent state modulation of the noiseless Bosonic channel with input 
power (photon number) $\overline{E}$, 
whose classical capacity is well-known \cite{pure-loss-C},
with the strong converse proved in \cite{WiWi:pure-loss}.
\begin{equation}\begin{split}
  C(\id,\overline{E}) &= g(\overline{E}) \\
                      &= (\overline{E}+1)\ln(\overline{E}+1) - \overline{E}\ln\overline{E} \\
                      &= \ln(1+\overline{E}) + \overline{E}\ln\left(1+\frac{1}{\overline{E}}\right).
\end{split}\end{equation}
Thus, when $2L\leq e^{m\bigl(C(\id,\overline{E})-\delta\bigr)}$, it follows 
from the Holevo-Schumacher-Westmoreland theorem \cite{Holevo:C,SchumacherWestmoreland:C,Holevo:C-E}
that with probability close to $1$, there exists a POVM $(D_\lambda)_{\lambda=1}^{2L}$ 
that decodes $\lambda$ reliably from the state $\proj{\underline{\alpha}^{(\lambda)}}$: 
\begin{equation}
  \frac{1}{2L}\sum_{\lambda=1}^{2L} \tr \proj{\underline{\alpha}^{(\lambda)}} D_\lambda \geq 1-e^{-c'm},
\end{equation}
with a suitable constant $c'>0$ and for all sufficiently large $m$. 
Thus, with $\rho_i$ ($i=0,1$) as defined above and
\begin{equation}
  M_i = \sum_{\lambda=1}^L D_{2\lambda-i} \quad (i=0,1),
\end{equation}
it follows
\begin{equation}
  \frac12 \tr \rho_0 M_0 + \frac12 \tr \rho_1 M_1 \geq 1-e^{-c'm},
\end{equation}
which implies Eq.~(\ref{eq:random-tracenorm}).

\medskip
\emph{Eq.~(\ref{eq:random-Wplusnorm}):} The Wigner functions 
$W_{\underline{\alpha}^{(\lambda)}}$ of the coherent states 
$\proj{\underline{\alpha}^{(\lambda)}}$ are $2m$-dimensional real 
Gaussian probability densities centered at $\underline{z}^{(\lambda)}$, 
where $z^{(\lambda)}_{2j-1} = \Re \alpha^{(\lambda)}_j\sqrt{2}$ and
$z^{(\lambda)}_{2j} = \Im \alpha^{(\lambda)}_j\sqrt{2}$ are the 
rescaled real and imaginary part of $\underline{\alpha}^{(\lambda)}_j$, 
respectively;
they have variance $\frac12$ in each direction. 
We read them as output distributions of an i.i.d.~additive white Gaussian noise 
(AWGN) channel on $2m$ inputs $\underline{z}^{(\lambda)}$,
and with noise power $\frac12$. Note that all $z^{(\lambda)}_j$ are themselves 
Gaussian distributed random variables with $\EE z^{(\lambda)}_j = 0$ 
and $\EE |z^{(\lambda)}_j|^2 = \overline{E}$.
This channel, which we denote $\widetilde{W}$ since its output distributions 
come from the Wigner functions of the coherent states 
$\underline{\alpha}^{(\lambda)}$, thanks to Shannon's famous formula with 
the signal-to-noise ratio has the capacity 
\begin{equation}
  C(\widetilde{W},\overline{E}) = \frac12 \ln(1+2\overline{E}).
\end{equation}
Thus, by the theory of approximation of output statistics \cite{HanVerdu:AOS},
adapted to the AWGN channel \cite{HanVerdu-AWGN}, it follows that when 
$2L\geq e^{2m\bigl(C(\widetilde{W},\overline{E})+\delta\bigr)}$, then with probability 
close to $1$
\begin{equation}
  \left\| W_i - W_{\gamma(\overline{E})}^{\ox m} \right\|_{L^1} \leq \frac12 e^{-c''m},
\end{equation}
for $i=0,1$, with a suitable constant $c''>0$ and for all sufficiently large $m$.
See~\cite[Thm.~6.7.3]{Han:InfoSpec} for the concrete statement. 
Hence, by the triangle inequality and Lemma \ref{lemma:W+vs-L1},
we get Eq.~(\ref{eq:random-Wplusnorm}). 

\medskip
It remains to put the two parts together: We observe that 
$2C(\widetilde{W},\overline{E}) < C(\id,\overline{E})$ for all $\overline{E} > 0$.
Indeed, a well-known elementary inequality states
\begin{equation}
  \ln(1+t) \geq \frac{t}{1+t},
\end{equation}
which we apply to $t=\frac{1}{\overline{E}}$, yielding
\begin{equation}
  E\ln\left(1+\frac{1}{\overline{E}}\right) 
     \geq \overline{E}\frac{\frac{1}{\overline{E}}}{1+\frac{1}{\overline{E}}}
     =    \frac{\overline{E}}{1+\overline{E}}
     >    \ln\left(1+\frac{\overline{E}}{1+\overline{E}}\right), 
\end{equation}
which is equivalent to the claim. 
This means that we can choose $\delta>0$ such that 
\(
  2C(W,\overline{E})+2\delta < C(\id,\overline{E})-\delta, 
\)
meaning we can satisfy 
\begin{equation}
  e^{2m\bigl(C(\widetilde{W},\overline{E})+\delta\bigr)} \leq 2L \leq e^{2m\bigl(C(\widetilde{W},\overline{E})+\delta\bigr)}
\end{equation}
simultaneously for all sufficiently large $m$. Finally, setting $c=\min\{c',c''\}$ 
concludes the proof. 
\end{proof}

\begin{remark}
While we didn't make any attempt to give a numerical value for $c$ 
(which is a function of $\overline{E}$), in principle it can be 
extracted from the HSW coding theorem for the noiseless Bosonic 
channel and the resolvability coding theorem for the AWGN channel.

Likewise, we presented the theorem as an asymptotic result, but the 
proofs of the two coding theorems will yield finite values of $m$ for 
which the constructions work with probability $>\frac34$, 
and so we get the existence of the data hiding states for that number of 
modes. 
\end{remark}

\begin{corollary}
  \label{cor:big-Chernoff}
  For the two $m$-mode states $\rho_0$ and $\rho_1$ from Theorem \ref{thm:GOCC-hiding},
  \[
    \xi(\rho_0,\rho_1) \geq \frac{c}{2}m - \ln\sqrt{2},
  \]
  whereas 
  \[\begin{split}
    \xi_\GOCC(\rho_0,\rho_1) &\leq \xi_\Wplus(\rho_0,\rho_1) \\
                             &\leq \xi(W_0,W_1)              \\
                             &\leq -2\ln\left(1-e^{-cm}\right) \sim 2 e^{-cm}.
  \end{split}\]
\end{corollary}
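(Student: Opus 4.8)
The plan is to read off both bounds directly from Theorem~\ref{thm:GOCC-hiding} together with the single-letter Chernoff formulae, using only the elementary observation that the Chernoff coefficient is controlled by the single-copy fidelity (or trace distance) from above and below. For the lower bound on $\xi(\rho_0,\rho_1)$, recall the well-known sandwich $\frac12\|\rho_0-\rho_1\|_1 \leq \sqrt{1-F(\rho_0,\rho_1)^2}$ relating trace distance and fidelity; squaring Eq.~(\ref{eq:random-tracenorm}) gives $F(\rho_0,\rho_1)^2 \leq 1-(1-e^{-cm})^2 \leq 2e^{-cm}$. Since the quantum Chernoff coefficient obeys $\xi(\rho_0,\rho_1) \geq -\ln F(\rho_0,\rho_1)^2$ (take $s=\tfrac12$ in the variational formula and use $\tr\rho_0^{1/2}\rho_1^{1/2} \geq F(\rho_0,\rho_1)$, or simply that $\xi$ dominates the Bhattacharyya exponent), we obtain $\xi(\rho_0,\rho_1) \geq -\ln(2e^{-cm}) = \frac{c}{?}m\,$—here one must be a little careful: with the convention $\xi = -\ln\inf_s \tr\rho_0^s\rho_1^{1-s}$ used in the paper, the exponent is per copy, and $-\ln(2e^{-cm}) = cm - \ln 2$; to match the stated $\frac{c}{2}m - \ln\sqrt 2 = \frac{c}{2}m - \frac12\ln 2$ one uses the cruder bound $F^2 \leq 2e^{-cm/?}$, or more honestly the statement in the corollary is simply $\xi \geq \frac{c}{2}m - \ln\sqrt 2$, which follows a fortiori from $\xi \geq cm-\ln 2$ since $cm - \ln 2 \geq \frac{c}{2}m - \frac12\ln 2$ for all $m$ large.

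For the upper chain, the first two inequalities $\xi_\GOCC \leq \xi_\Wplus \leq \xi(W_0,W_1)$ are exactly Corollary~\ref{cor:Chernoff}, which applies because $\rho_0$ and $\rho_1$, being mixtures of coherent states, have non-negative Wigner functions that are genuine probability densities. For the final inequality $\xi(W_0,W_1) \leq -2\ln(1-e^{-cm})$, I would bound the classical Chernoff coefficient of $W_0,W_1$ by their Bhattacharyya overlap and then by one minus half the $L^1$ distance: for probability densities $p,q$, $\inf_{0<s<1}\int p^s q^{1-s} \geq \int \sqrt{pq} \geq 1 - \frac12\|p-q\|_{L^1}$, the last step being the standard inequality $1-\mathrm{BC}(p,q) \leq \frac12\mathrm{TV}(p,q)$ (or equivalently $\mathrm{BC} \geq 1-\mathrm{TV}$). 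Hence $\xi(W_0,W_1) = -\ln\inf_s\int W_0^s W_1^{1-s} \leq -\ln\bigl(1-\frac12\|W_0-W_1\|_{L^1}\bigr)$. By the triangle inequality applied to the two bounds from the proof of Theorem~\ref{thm:GOCC-hiding}, $\|W_0-W_1\|_{L^1} \leq \|W_0 - W_{\gamma(\overline E)}^{\ox m}\|_{L^1} + \|W_1 - W_{\gamma(\overline E)}^{\ox m}\|_{L^1} \leq e^{-c''m} \leq e^{-cm}$ (absorbing the factor $\tfrac12$ and using $c = \min\{c',c''\}$), giving $\xi(W_0,W_1) \leq -\ln(1-\tfrac12 e^{-cm}) \leq -2\ln(1-e^{-cm})$ for $m$ large, and the asymptotic equivalence $\sim 2e^{-cm}$ is just the first-order expansion $-\ln(1-x) \sim x$.

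There is no serious obstacle here; the only thing requiring care is bookkeeping of the constants so that the clean statement $\frac{c}{2}m - \ln\sqrt 2$ holds with the \emph{same} $c = \min\{c',c''\}$ that appears in Theorem~\ref{thm:GOCC-hiding} — this is why the corollary is stated with the slightly weakened coefficient $\frac{c}{2}$ rather than the sharp $c' $, so that one need not track the precise interplay between the HSW constant $c'$ and the resolvability constant $c''$. In short, the proof is: (i) convert the trace-norm lower bound of Theorem~\ref{thm:GOCC-hiding} into a fidelity bound and feed it into the single-copy bound $\xi \geq -\ln F^2$; (ii) invoke Corollary~\ref{cor:Chernoff}; (iii) bound the classical Chernoff coefficient of the Wigner densities by their total variation distance via Bhattacharyya, and use the $L^1$ estimate from the proof of Theorem~\ref{thm:GOCC-hiding}. \qed
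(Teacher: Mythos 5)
Your overall strategy (single\-copy fidelity and total\-variation bounds on the Chernoff quantity $Q:=\inf_{0<s<1}\tr\rho_0^s\rho_1^{1-s}$) is the same as the paper's, but two of your key inequalities point the wrong way, and one of them is genuinely false. For the lower bound you invoke $\xi(\rho_0,\rho_1)\geq-\ln F(\rho_0,\rho_1)^2$, justified by ``$\tr\rho_0^{1/2}\rho_1^{1/2}\geq F$''. With the paper's convention $F=\|\sqrt{\rho_0}\sqrt{\rho_1}\|_1$ the true inequality is the reverse, $\tr\rho_0^{1/2}\rho_1^{1/2}\leq\|\sqrt{\rho_0}\sqrt{\rho_1}\|_1=F$ (the trace is dominated by the trace norm), and the general relation is $F^2\leq Q\leq F$; so $Q\leq F^2$, i.e.\ $\xi\geq-\ln F^2$, does not hold for mixed states, and your intermediate claim $\xi\geq cm-\ln 2$ is unsupported. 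The correct bound $Q\leq F$ together with $F\leq\sqrt{2e^{-cm}}$ (from Fuchs--van de Graaf and Eq.~(\ref{eq:random-tracenorm})) gives exactly $\xi\geq-\ln F\geq\frac{c}{2}m-\ln\sqrt2$; this is the paper's argument, and it also resolves your puzzlement about the coefficient: the factor $\frac{c}{2}$ is forced by using $-\ln F$ rather than $-\ln F^2$, not a bookkeeping concession.

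For the upper chain, the first two inequalities via Corollary~\ref{cor:Chernoff} are fine. But your classical chain ``$\inf_s\int W_0^sW_1^{1-s}\geq\int\sqrt{W_0W_1}$'' is also backwards: the infimum over $s$ is at most its value at $s=\tfrac12$. The endpoint $\inf_s\int W_0^sW_1^{1-s}\geq1-\frac12\|W_0-W_1\|_{L^1}$ is nevertheless true, via the pointwise inequality $\min(p,q)\leq p^sq^{1-s}$ and $\int\min(W_0,W_1)=1-\frac12\|W_0-W_1\|_{L^1}$; with that repair your route is actually more elementary than the paper's (which purifies $W_0,W_1$, applies Uhlmann's theorem and monotonicity of the Chernoff coefficient under partial trace, and then the pure-state formula $\xi=-\ln F^2$), and it even yields the slightly stronger $\xi(W_0,W_1)\leq-\ln(1-e^{-cm})$. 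So the statement is recoverable along your lines, but as written both halves of the proof rest on inequalities asserted in the wrong direction.
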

\begin{proof}
With $\epsilon = e^{-cm}$ as in Theorem \ref{thm:GOCC-hiding}, 
we use the Fuchs-van de Graaf relation between trace distance 
and fidelity \cite{FvdG}:
\begin{equation}
  1-F(\rho_0,\rho_1) \leq \frac12 \|\rho_0-\rho_1\|_1 \leq \sqrt{1-F(\rho_0,\rho_1)^2},
\end{equation}
where the mixed-state fidelity is given 
\begin{equation}
  F(\rho_0,\rho_1) := \|\sqrt{\rho_0}\sqrt{\rho_1}\|_1. 
\end{equation}

Now, we get first $1-F(\rho_0,\rho_1)^2 \geq (1-\epsilon)^2$. And then
we can estimate: 
\begin{equation}\begin{split}
  e^{-\xi(\rho_0,\rho_1)} &=    \inf_{0<s<1} \tr\rho_0^s\rho_1^{1-s} \\
                          &\leq \tr\sqrt{\rho_0}\sqrt{\rho_1}       \\
                          &\leq \|\sqrt{\rho_0}\sqrt{\rho_1}\|_1   \\
                          &=    F(\rho_0,\rho_1) 
                           \leq \sqrt{2\epsilon}.
\end{split}\end{equation}

Secondly, we get $1-\epsilon \leq F(W_0,W_1) = F(\omega_0,\omega_1)$,
with suitable purifications $\omega_i$ of $W_i$, according to Uhlmann's theorem. 
By Corollary \ref{cor:Chernoff}, 
\begin{equation}\begin{split}
    \xi_\GOCC(\rho_0,\rho_1) &\leq \xi_\Wplus(\rho_0,\rho_1) \\
                             &\leq \xi(W_0,W_1)               \\
                             &\leq \xi(\omega_0,\omega_1)      \\
                             &=    -\ln F(\omega_0,\omega_1)^2 \\
                             &=    -\ln F(W_0,W_1)^2           
                              \leq -2\ln(1-\epsilon),
\end{split}\end{equation}
where in the third line we have used the monotonicty of the Chernoff 
coefficient under partial traces, and in the fourth line the formula
for the Chernoff coefficient for pure states. 
\end{proof}

\section{Lower bounds on distinguishability under \protect\\ 
         GOCC and W+ measurements}
\label{sec:lower-bounds}
So far, we have seen examples of separations, including large ones, 
between the trace norm and GOCC and W+ norms. 
Especially about the construction in the previous section
we can ask, whether and in which sense it uses the available resources
optimally: these would be the number of modes and the energy. 
Here we show lower bounds on the distinguishability of general
states when restricted to W+, compared to the trace norm. 
They are motivated by similar studies under the LOCC, SEP or PPT 
constraint, or an abstract constraint on the allowed measurements
\cite{MWW,LW}, see also \cite{ultimate}.

\begin{proposition}
  \label{prop:W+lowerbound}
  For any two $m$-mode states $\rho_0$ and $\rho_1$,
  \[
    \| \rho_0-\rho_1 \|_\Wplus \geq 2^{-m-1}\| \rho_0-\rho_1 \|_2^2 = 2^{-m-1}\tr(\rho_0-\rho_1)^2.
  \]
\end{proposition}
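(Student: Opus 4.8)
The plan is to work with the dual form of the norm. Writing $\Delta:=\rho_0-\rho_1$ (Hermitian, traceless, with $\|\Delta\|_\infty\le 1$ since $\rho_0,\rho_1$ are states), Eqs.~(\ref{eq:HelstromHolevo}) and (\ref{eq:Wplus-norm}) give $\tfrac12\|\Delta\|_\Wplus=\sup\{\tr(\Delta M):(M,\1-M)\text{ a W+ POVM}\}$, so it suffices to exhibit, for any such $\Delta$, a single W+ POVM element $M$ with $\tr(\Delta M)\ge 2^{-m-2}\tr\Delta^2$. The unconstrained optimum is attained by the Helstrom projector $P$ onto the positive eigenspace of $\Delta$, with $\tr(\Delta P)=\tfrac12\|\Delta\|_1\ge \tfrac12\|\Delta\|_2^2=\tfrac12\tr\Delta^2$ (using $\|\Delta\|_1\ge\|\Delta\|_2^2/\|\Delta\|_\infty$ and $\|\Delta\|_\infty\le 1$); the whole game is to replace $P$ by something W+ while losing at most a factor $2^{m+1}$.

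The first ingredient is a supply of W+ POVM elements. Two constructions are immediate. (i) For any state $\sigma$ with non-negative Wigner function, $M=2^{-m}\sigma$ is a W+ POVM element: $0\le 2^{-m}\sigma\le\1$, $W_{2^{-m}\sigma}\ge0$, and $W_{\1-2^{-m}\sigma}=(2\pi)^{-m}-2^{-m}W_\sigma\ge 0$ because $\|W_\sigma\|_\infty\le\pi^{-m}\|\sigma\|_1=\pi^{-m}$ (more generally $M=c\sigma$ is admissible as soon as $c\|W_\sigma\|_\infty\le(2\pi)^{-m}$). (ii) Coarse-grained homodyne tests: after any Gaussian unitary, measuring a set of commuting quadratures and thresholding yields an operator whose Wigner function is $(2\pi)^{-m}$ times the indicator of a region, hence W+. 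I expect family (ii) to be indispensable --- family (i) alone is already too weak for, e.g., $\rho_0=\proj{n}$, $\rho_1=\proj{n+1}$ with $n$ large, since no state with non-negative Wigner function concentrates appreciably on a single high number state, whereas the $x$-homodyne distributions of $\proj n$ and $\proj{n+1}$ stay $\Omega(1)$-apart in $L^1$.

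Given $\Delta$, the construction I would try is: either a Gaussian-regularised version $\sigma$ of $P$ from family (i) (smearing $P$ just enough that its Wigner function becomes non-negative, giving $\|\Delta\|_\Wplus\ge 2^{-m+1}\tr(\Delta\sigma)$), or a region test from family (ii) obtained by rotating $\Delta$ into a cleverly chosen quadrature eigenbasis and thresholding the induced signed marginal of $W_\Delta$; presumably one loses one factor of $\tfrac12$ per mode. The hard part is controlling the constant. Crude operator-norm or Cauchy--Schwarz estimates are lossy precisely on oscillatory $\Delta$: once the smearing scale (or the ``width'' of the measurement) exceeds the oscillation scale of $W_\Delta$, the smeared or marginalised object retains only a vanishing fraction of $\tr\Delta^2$. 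So the real obstacle is to pick the W+ measurement --- the squeezing and direction of the homodyne, or the precise shape of $\sigma$ --- so that it resolves the structure of $\Delta$ well enough to recover $\tr\Delta^2$ up to exactly $2^{m+1}$, and then to check that the resulting operator is genuinely a POVM element; this last point is where the non-negativity-of-Wigner constraint, the very thing that separates $\|\cdot\|_\Wplus$ from $\|\cdot\|_1$, exacts its toll, and pinning the loss down to $2^{-m-1}$ is the crux.
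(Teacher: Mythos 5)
Your proposal sets up the right dual formulation but stops exactly where the proof has to happen: you state yourself that ``pinning the loss down to $2^{-m-1}$ is the crux'' and leave it open. That is a genuine gap, and moreover the strategy you outline (smearing the Helstrom projector $P$, or thresholding a homodyne marginal, so as to approximate the optimal test by a W+ one) is aimed at the wrong target. The right-hand side of the Proposition is $\|\Delta\|_2^2=\tr\Delta^2$, not $\|\Delta\|_1$, and the operator $X$ that realises $\tr(\Delta X)=\tr\Delta^2$ is $X=\Delta$ itself --- so there is no need to approximate $P$ at all. The paper's proof is the standard ``identity cushion'' ansatz (as in the finite-dimensional restricted-measurement bounds of \cite{MWW}): take
\begin{equation*}
  M=\tfrac12(\1+\eta\Delta),\qquad \1-M=\tfrac12(\1-\eta\Delta).
\end{equation*}
Operator positivity holds for $\eta\le1$ since $\|\Delta\|_\infty\le1$, and the W+ property follows from the two uniform bounds you already quote elsewhere in your write-up: $W_\1=(2\pi)^{-m}$ and $|W_\rho|\le\pi^{-m}$ for any state, hence $|W_\Delta|\le2\pi^{-m}$, so $W_M=\tfrac12W_\1\pm\tfrac12\eta W_\Delta\ge0$ as soon as $\eta\le2^{-m-1}$. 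With $\eta=2^{-m-1}$ and $\tr\Delta=0$ one gets $\tr\Delta(2M-\1)=\eta\tr\Delta^2=2^{-m-1}\|\Delta\|_2^2$, which is the claim. Note that this $M$ belongs to neither of your two families (i) and (ii); the essential trick your taxonomy misses is that mixing \emph{any} traceless Hermitian perturbation with enough of the maximally mixed ``state'' $\tfrac12\1$ is automatically W+, with the admissible perturbation strength $2^{-m-1}$ set precisely by the ratio of the sup-norm bounds $\pi^{-m}$ and $(2\pi)^{-m}$ --- that ratio is where the factor $2^{-m-1}$ comes from, with no Cauchy--Schwarz or smearing loss to control. (Your observation that family (i) alone cannot work, e.g.\ for high Fock states, is correct and is a good sanity check that some identity admixture is unavoidable.)
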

\begin{proof}
We will write down a specific W+ POVM that achieves the r.h.s. as its 
statistical distance. In fact, with $\Delta = \rho_0-\rho_1$, for 
our POVM $(M,\1-M)$ we make the ansatz  
\begin{equation}\begin{split}
  M    &= \frac12(\1 + \eta\Delta), \\
  \1-M &= \frac12(\1 - \eta\Delta),
\end{split}\end{equation}
with a suitable constant $\eta>0$ to ensure that not only is this a POVM 
(for which it is enough that $\eta\leq 1$), but a W+ POVM. For that purpose, 
recall that $W_\1 = (2\pi)^{-m}$.
Recall furthermore that the Wigner functions of states are 
bounded, $|W_\rho(x,p)| \leq \pi^{-m}$, 
see Eq.~(\ref{eq:Wigner-expectation}).

This means that $|W_\Delta(x,p)| \leq 2 \pi^{-m}$, 
and so $W_{M/\1-M} = \frac12 W_\1 \pm \frac12 \eta W_\Delta \geq 0$
is guaranteed by letting $\eta = 2^{-m-1}$.

Thus we have $2M-\1=\eta\Delta$, and can calculate
\begin{equation}
  \|\rho_0-\rho_1\|_\Wplus \geq \tr \Delta(2M-\1) = 2^{-m-1}\|\rho_0-\rho_1\|_2^2,
\end{equation}
concluding the proof.
\end{proof}

\begin{corollary}
  \label{cor:W+lowerbound}
  Consider two $m$-mode states $\rho_0$ and $\rho_1$, with average energy
  (photon number) per mode bounded by $\overline{E}$ and 
  $\|\rho_0-\rho_1\|_1 \geq t > 0$. Then, with $t=4c+r$,
  \[
    \| \rho_0-\rho_1 \|_\Wplus \geq r^2 2^{-m-1}\left(1+\frac{\overline{E}}{c^2}\right)^{-m} \!\!\!\!.
  \]

  Thus, to achieve the kind of separation as in Theorem~\ref{thm:GOCC-hiding},
  between a ``large'' trace norm and ``small'' W+ norm, with bounded energy per mode, 
  their number necessarily has to grow; or else, the energy per mode has to
  grow very strongly.
\end{corollary}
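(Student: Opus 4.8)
The plan is to reduce Corollary~\ref{cor:W+lowerbound} to Proposition~\ref{prop:W+lowerbound} by way of two observations: first, that the trace-norm gap $t$ between $\rho_0$ and $\rho_1$ cannot all be concentrated on a high-energy subspace, and second, that on a suitable low-energy subspace the Hilbert--Schmidt norm controls the trace norm. Concretely, fix a threshold and project onto the Fock subspace with at most $Kc^2$ total photons (the precise cutoff will be tuned to the statement, so that the factor $(1+\overline{E}/c^2)^{m}$ emerges). By Markov's inequality applied to the average energy constraint, the projector $P$ onto this low-energy subspace satisfies $\tr\rho_i P \geq 1 - \text{(something)}$, and hence $\|(\1-P)(\rho_0-\rho_1)(\1-P)\|_1$ and the cross terms are all small, of order $4c$ after optimising constants. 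So $\|P(\rho_0-\rho_1)P\|_1 \geq t - 4c = r$.

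Next I would invoke the finite-dimensionality of the range of $P$: if $D = \dim(\operatorname{ran} P)$, then for any Hermitian $A$ supported on that range, $\|A\|_1 \leq \sqrt{D}\,\|A\|_2$, hence $\|P(\rho_0-\rho_1)P\|_2^2 \geq r^2/D$. The dimension $D$ of the span of $m$-mode Fock states with total photon number $\leq \Lambda$ is $\binom{\Lambda+m}{m}$, and the key estimate is that this is bounded by $(1+\Lambda/m)^m e^{\Lambda}$ or a similar clean expression; choosing $\Lambda \asymp m\overline{E}/c^2$ makes this $\leq e^{O(m)}(1+\overline{E}/c^2)^m$ up to absorbing the exponential-in-$m$ prefactor. (One has to be slightly careful: the bound $\binom{\Lambda+m}{m}\le \left(e(1+\Lambda/m)\right)^m$ gives exactly the shape $(1+\overline{E}/c^2)^{-m}$ once the $e^m$ is folded into the leading constant or, if one wants the inequality exactly as stated, by a tighter counting argument.) Then Proposition~\ref{prop:W+lowerbound} applied not to the whole system but noting that $\|\rho_0-\rho_1\|_\Wplus \geq \|P(\rho_0-\rho_1)P\|_\Wplus$ — or, more simply, observing that $\|\rho_0-\rho_1\|_2^2 = \tr(\rho_0-\rho_1)^2 \geq \tr\bigl(P(\rho_0-\rho_1)P\bigr)^2 = \|P(\rho_0-\rho_1)P\|_2^2 \geq r^2/D$ — gives
\[
  \|\rho_0-\rho_1\|_\Wplus \;\geq\; 2^{-m-1}\,\|\rho_0-\rho_1\|_2^2 \;\geq\; 2^{-m-1}\,\frac{r^2}{D},
\]
and substituting the bound on $D$ finishes the argument.

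The main obstacle I anticipate is pinning down the energy-truncation step so that the constants line up with the stated inequality, in particular getting the ``$4c$'' in $t = 4c+r$ rather than a messier constant, and getting the dimension bound to land exactly on $\left(1+\overline{E}/c^2\right)^m$ without a spurious $e^m$ or $2^m$. The factor of $4$ presumably comes from bounding four terms in the expansion $\rho_0-\rho_1 = P\Delta P + P\Delta P^{\perp} + P^{\perp}\Delta P + P^{\perp}\Delta P^{\perp}$ and estimating each in trace norm by $\|P^\perp \sqrt{\rho_i}\|$-type quantities via Cauchy--Schwarz, using $\tr \rho_i P^\perp \leq \overline{E}\,m / (\text{cutoff}) =: c$ after choosing the Fock cutoff to be the total photon number $\overline{E}m/c$; then each cross term is $\leq 2\sqrt{c}$ or $\leq c+\sqrt{c}$, and a slightly lossy but clean bookkeeping yields $4c$. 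The dimension count is then $D = \binom{\lfloor \overline{E}m/c\rfloor + m}{m}$, and one checks $\binom{N+m}{m} \le \left(1+N/m\right)^{m}\!\cdot 2^{m}$ or similar, absorbing the $2^m$ into the already-present $2^{-m-1}$; I would double-check whether the paper intends the $2^{-m-1}$ in the corollary to already account for this, which it appears to, so that the Hilbert-space dimension contributes exactly the $(1+\overline{E}/c^2)^{-m}$ factor. The ``moreover'' sentence about needing $m$ to grow is then immediate: if $m$ is fixed and $\overline{E}$ fixed, the right-hand side is a fixed positive number, contradicting an exponentially small $\Wplus$ norm; and it degrades only polynomially slowly unless $\overline{E}$ itself grows like $c^2 \cdot (\text{exponential in } m)$.
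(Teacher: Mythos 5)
Your proposal follows the paper's proof essentially step for step: truncate to a low-photon-number subspace via Markov's inequality applied to the energy bound, control the resulting trace-norm perturbation, compare the $1$- and $2$-norms on the $D$-dimensional truncated space, and feed $\|\rho_0-\rho_1\|_2^2\geq r^2/D$ into Proposition~\ref{prop:W+lowerbound}; your four-block Cauchy--Schwarz estimate of the cross terms is just an inline proof of the gentle measurement lemma, which is what the paper cites instead. The two constants you leave open do need attention, and one of your tentative choices would fail: with the cutoff at total photon number $m\overline{E}/c$, Markov only gives $\tr\rho_i P^\perp\leq c$, and then the off-diagonal blocks are of order $\sqrt{c}$, so you would obtain $t-O(\sqrt{c})$ rather than $t-4c$. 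The correct choice (the paper's) is total photon number $\leq m\overline{E}/c^2$, so that $\tr\rho_i P^\perp\leq c^2$ and the gentle measurement lemma gives $\|\rho_i-P\rho_i P\|_1\leq 2c$ for each $i=0,1$, whence the $4c$. Your worry about the dimension count is also well founded rather than a defect of your argument: the clean estimate is $\binom{N+m}{m}=\prod_{j=1}^m\bigl(1+N/j\bigr)\leq(1+N)^m$, which avoids any $e^m$, but with the cutoff $N=m\overline{E}/c^2$ forced by the Markov step this yields $\bigl(1+m\overline{E}/c^2\bigr)^m$; the paper silently applies the estimate with $N=\overline{E}/c^2$ instead, so the constant in the corollary as printed is optimistic by a factor that is at worst $e^{-m}$ and does not affect the qualitative conclusion. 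Apart from these bookkeeping points, which you yourself flagged, your argument is the intended one.
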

\begin{proof}
Construct the projector $P$ onto the space of all $m$-mode number 
states with photon number $\leq m \frac{\overline{E}}{c^2}$. 
Denote $D = \operatorname{rank} P$.
By the assumption of the energy bound, and Markov's inequality, 
\begin{equation}
  \tr\rho_0 P,\ \tr\rho_1 P \geq 1-c^2. 
\end{equation}
Hence, by the gentle measurement lemma \cite{Winter:qstrong},
\begin{equation}
  \|\rho_0 - P\rho_0 P\|_1,\ \|\rho_1 - P\rho_1 P\|_1 \leq 2c,
\end{equation}
and so by the triangle inequality $\|P\rho_0 P-P\rho_1 P\|_1 \geq t-4c = r$.

Now, 
\begin{equation}\begin{split}
  \|\rho_0-\rho_1\|_2 &\geq \|P\rho_0 P-P\rho_1 P\|_2 \\
                      &\geq \frac{1}{\sqrt{D}}\|P\rho_0 P-P\rho_1 P\|_1 
                       \geq \frac{r}{\sqrt{D}},
\end{split}\end{equation}
where the first inequality follows from the fact that the Frobenius norm 
squared is the sum of the modulus-squared of the all the matrix entries, 
and the projector $P$ simply gets rid of some of those; the second is the 
well-known comparison between (Schatten) $1$- and $2$-norms on a $D$-dimensional 
space.  
Thus, by Proposition \ref{prop:W+lowerbound} we have 
\begin{equation}
  \|\rho_0-\rho_1\|_\Wplus \geq 2^{-m-1}\|\rho_0-\rho_1\|_2^2 \geq r^2 2^{-m-1}\frac{1}{D},
\end{equation}
and it remains to control $D$. Note that by its definition, it has an 
exact expression as a binomial coefficient,
\begin{equation}
  D =    {\left\lfloor \frac{\overline{E}}{c^2} \right\rfloor + m \choose m}
    \leq {\frac{\overline{E}}{c^2} + m \choose m}
    \leq \left(1+\frac{\overline{E}}{c^2}\right)^{m},
\end{equation}
concluding the proof.
\end{proof}

\medskip
We believe that a lower bound like the one of Corollary \ref{cor:W+lowerbound} 
should hold for the GOCC norm, too. To get such a bound, we need to find a 
``pretty good'' Gaussian measurement to distinguish two given states. 

A possible strategy might be provided by \cite[Thms.~13 and 14]{MWW}, 
where it is shown that in dimension $D<\infty$, a fixed rank-one POVM $\cM$
whose elements form a (weighted) $2$-design, provides a bound
\begin{equation}
  \|\cdot\|_{\cM} \geq \frac{1}{2D+2}\|\cdot\|_1.
\end{equation}
This should hold with corrections for approximate designs, too, cf.~\cite{AmbainisEmerson}.

Obviously, as with Bosonic systems we are in infinite dimension, the 
dimension bound is \emph{a priori} not going to be useful. However, we can take 
inspiration from Corollary \ref{cor:W+lowerbound} and its proof, where 
we assume energy-bounded states, which we cut off at a finite photon number, 
restricting them thus to a finite-dimensional subspace. 

The more serious obstacle is that we would have to construct a Gaussian 
measurement, or a probabilistic mixture of Gaussian measurements, that 
approximates a $2$-design. 
But while the set of all Gaussian states has a locally compact symmetry group 
(symplectic and displacement transformations in phase space) that is consistent 
with a $2$-design, notorious normalisation and convergence issues 
prevent us from treating it as such \cite{Blume-KohoutTurner}. 

A different approach would be to analyse an even simpler measurement, 
which however must be tomographically complete. A nice candidate would 
be heterodyne detection on each mode, Eq. (\ref{eq:hetero}).

\section{Discussion}
\label{sec:conclusions}
By analysing the Wigner functions of Bosonic quantum states, we showed that 
there can be arbitrarily large gaps between the GOCC norm distance and the 
trace distance. In terms of the norm based on POVMs with positive Wigner 
functions, we could show that the separation necessarily requires
many modes, if we are in the regime of states with bounded energy per mode. 

Our results beg several questions, among them the following: 
first, is it possible to derandomise the construction of Theorem \ref{thm:GOCC-hiding}, 
in the sense that we would like to have concrete (not random) states with 
guaranteed separation of GOCC vs trace norm? 
Secondly, while our construction requires multiple modes, is it possible to 
have GOCC data hiding in a fixed number of modes, or even a single mode, 
at the expense of larger energy (cf.~Corollary \ref{cor:W+lowerbound})? 

Fortuitously, the recent work by Lami \cite{LL:new} goes some way 
towards addressing these questions: Indeed, \cite[Ex.~5]{LL:new}
shows two orthogonal Fock-diagonal states, called the
\emph{even and odd thermal states}, which while being at maximum 
possible trace distance, have arbitrarily small GOCC (and indeed W+) 
distance for sufficiently large energy (temperature). The
resulting upper bound \cite[Eq.~(23)]{LL:new} even compares well 
with our lower bound from Corollary \ref{cor:W+lowerbound}, when $m=1$.

The main difference to our scheme in Theorem \ref{thm:GOCC-hiding}
is that those even and odd thermal states are not Gaussian, or even 
mixtures of Gaussian states, in fact they have negative Wigner function,
indicating the difficulty in creating them. Instead our states, 
while undoubtedly complex (being multi-mode and requiring subtle 
arrangements of points in phase space, are simply uniform mixtures 
of coherent states, so in a certain sense they are easy to prepare
(an experimental implementation would be however still be challenging). 

As a matter of fact, this is best expressed in resource theoretic 
terms, noticing that GOCC actually can be defined as a class of 
quantum maps (to be precise: instruments), beyond our Definition 
\ref{defi:GOCC} of only GOCC measurements. This point of view is 
clearly evident in earlier references \cite{TakeokaSasaki}, even 
if it is not formalised. But recently, several attempts have been 
made to create fully-fledged resource theories of non-Gaussianity and of 
Wigner-negativity \cite{ZSS,TZ,AGPF}. While these works specifically 
focus on state transformations, and in particular the distillation of 
some form of ``pure'' non-Gaussian resource, our problem of the 
creation and discrimination of data hiding states are naturally 
phrased in the general resource theory. Indeed, in the framework 
of \cite{TZ,AGPF}, our GOCC measurements are free operations, and 
so are the state preparation of $\rho_0$ and $\rho_1$ from 
Theorem \ref{thm:GOCC-hiding}.
Thus, our results can be interpreted as contributions towards 
assessing the non-Gaussianity (Wigner negativity) of a measurement 
that distinguishes two states optimally. Here is the largest difference
to the cited recent papers, which formalise the resource character
of states, whereas our focus is on quantum operations. 
In that sense, Theorem \ref{thm:GOCC-hiding} (and equally 
\cite[Ex.~5]{LL:new}) provides a benchmark for the realisation 
of non-Gaussian quantum information processing, simply because 
optimal, or even decent discrimination of the states requires 
considerable abilities beyond the Gaussian (``linear'') realm.

\acknowledgments
The authors are grateful to Toni Ac\'{\i}n and Gael Sent\'{\i}s for prompting
the first formulation of the question treated in the present paper, during 
and after the doctoral defence of Gael, and in particular for sharing 
Ref.~\cite{TakeokaSasaki}. 
Thanks to John Calsamiglia and Ludovico Lami for asking many further 
questions which directed the present research, in particular about the 
GOCC Chernoff coefficient.
After the present work having been suspended for many years, we especially 
thank Ludovico Lami, whose keen interest in data hiding in general, and 
recent work \cite{LL:new} in particular, have eventually provided the 
motivation to finish and publish the present manuscript. 
%
%
Finally, we thank Prof. Luitpold Blumenduft for elucidating an optical 
phenomenon that bears a certain analogy to the phenomenon of Gaussian data hiding.

The authors' work was supported by the European Commission (STREP ``RAQUEL''), 
the ERC (Advanced Grant ``IRQUAT''),
the Spanish MINECO (grants FIS2008-01236, FIS2013-40627-P, 
FIS2016-86681-P and PID2019-107609GB-I00), with the support of FEDER funds,
and by the Generalitat de Catalunya, CIRIT projects 
2014-SGR-966 and 2017-SGR-1127.

\end{document}